\newtheorem{theorem}{Theorem}
\newtheorem{definition}{Definition}
\newtheorem{remark}{Remark}
\renewcommand\vec{\boldsymbol}
\newcommand{\norm}[1]{\left\lVert#1\right\rVert}
\newcommand{\scal}[2]{\langle #1,#2\rangle}
\newcommand*\colvec[1]{
        \global\colveccount#1
        \begin{pmatrix}
        \colvecnext
}
\def\colvecnext#1{
        #1
        \global\advance\colveccount-1
        \ifnum\colveccount>0
                \\
                \expandafter\colvecnext
        \else
                \end{pmatrix}
        \fi
}
\begin{document}

\title{Structural parts as quadrics - elasticity ellipses revisited}
\author{Tam\'as Baranyai}
\maketitle

\abstract
Elasticity ellipses or central ellipses have been long used in graphic statics to capture the elastic behaviour of structural elements.
The paper gives a generalisation the concept both in dimensions and in the possibility of degenerate conics / quadrics.
The effect of projective transformations of these quadrics is also given, such that the entire mechanical system can be transformed preserving equilibrium and compatibility between its elements.

\section{Introduction}

The idea of graphically representing the elastic behaviour of a structural element can be found in many classical works of graphic statics. Possibly the best known examples are the ellipse of inertia used for graphically constructing the core (Kern) of a cross-section, and the ellipse of elasticity used to graphically construct the force in a rod, given the centre of relative rotation of its ends. 
In "Die Graphische Statik" of Culman \cite{culmann1875graphische}  one finds these as central ellipses, along with a central ellipsoid in 3D containing these ellipses. While in these cases the area moments of inertia are used, the ellipsoidal representation of the mass moments of inertia is even older, introduced by Poinsot \cite{poinsot1834} and further investigated by by Clebsch \cite{clebsch1860}.

This 3 dimensional treatment seems to be missing from later interpretations graphic statics \cite{anwendungen_egyben}, as the focus was on planar ruler and compass constructions. Eventually the use of these ellipses became sparce even in the planar case, as the graphic analysis of indeterminate structures evolved into the fixed-point method of Suter \cite{suter1923} that later took the algebraic form of the Cross-method \cite{cross1932analysis}.

Although not as wide-spread these graphic tools are still being used today, for instance in seismic analysis, \cite{papparoni,Faggella2014,FAGGELLA2017128}, the examination of historic structures \cite{ferrari2008theory} or even dental protheses \cite{Williams1991}.  

Furthermore, graphic statics is currently undergoing a renaissance partly due to computerization as it allows efficient creation of constructions visually representating the forces inside a structure \cite{hablicsek2019algebraic}. Apart from this it is used for structural optimization through application of reciprocal diagrams \cite{beghini2014structural} or through projective transformations \cite{fivet2016projective}.

This paper revisits elasticity ellipses in a more contemporary way. From the engineering standpoint it will follow the logic of numerical methods, building up a structure from a set of members with different supporting conditions represented by different stiffness matrices.  Linear members are examined first, then (sub-)structures as their sums. It will be shown how in a global coordinate system the known stiffness matrices can readily be interpreted as conics and quadrics similar to the elasticity ellipses, representing the geometrical relations the stiffness of the members entail. To this end numerous concepts from projective geometry are required, as this subject is typically not part of engineering curricula a brief description of these concepts is provided below.
 
\section{Notation, preliminaries}
Due to the mechanical motivation, we will present concepts for real (finite dimensional) projective spaces. The reader may find details in \cite{richter2011perspectives} or \cite{pottmann2001computational}.

\subsection{Projective space associated to a vector-space}
Consider $\mathbb{R}^{n+1}$ with the equivalence relation 
\begin{align}
\vec{u} \sim \vec{v} \iff \vec{u}=\lambda \vec{v} \quad \vec{u},\vec{v}\in \mathbb{R}^{n+1}, \lambda \in \mathbb{R}\setminus \{0 \}
\end{align}
(we will treat all vectors as column vectors if the distinction is necessary). Factorizing $\mathbb{R}^{n+1}\setminus \{0\}$ with this relation leads to equivalence classes $\vec{u}_{\sim}$ that can be considered points of an $n$ dimensional real projective space $PG(n)$. We will use the fact that $n-1$ dimensional projective subspaces (hyperplanes) can be represented similarly through the scalar product: point $\vec{p}_{\sim}$ lies in hyperplane $\vec{h}_{\sim}$ if and only if $\scal{\vec{p}}{\vec{h}}=0$ holds (it can be seen that the choice of vectors from the equivalence class is irrelevant.) In general a $k$ dimensional projective subspace in $PG(n)$ is identified with a $k+1$ dimensional subspace minus the origin of $\mathbb{R}^{n+1}$.

\subsection{Collineations and correlations}
We will be looking at two types of transformations: collineations and correlations. In both cases there is a bijection between all such transformations and all invertible matrix equivalence classes. As such the algebraic description on homogeneous coordinates can be given with matrix multiplication. (In the descriptions below points are $0$ dimensional projective subspaces.) 

Collineations map $k$ dimensional subspaces of $PG(n)$ to $k$ dimensional subspaces of $PG(n)$ such that all incidences are preserved. Given matrix $P$, the transformation corresponding to it's equivalence class be described with 
\begin{align}
\text{ points to points } \quad &\vec{p}_{\sim} \mapsto \vec{P}\vec{p}_{\sim} \label{eq:trafo1} \\
\text{ hyperplanes to hyperplanes } \quad &\vec{h}_{\sim} \mapsto \vec{P}^{-T} \vec{h}_{\sim}.\label{eq:trafo2}
\end{align} 

Correlations map $k$ dimensional subspaces of $PG(n)$ to $n-k-1$ dimensional subspaces of $PG(n)$ such that all incidences are preserved (point $p$ incident with hyperplane $h$ is mapped into hyperplane $p'$ incident with point $h'$). Given matrix $\vec{P}$, the transformation corresponding to it's equivalence class be described with 
\begin{align}
\text{ points to hyperplanes } \quad &\vec{p}_{\sim} \mapsto \vec{P}\vec{p}_{\sim}  \\
\text{ hyperplanes to points } \quad &\vec{h}_{\sim} \mapsto \vec{P}^{-T} \vec{h}_{\sim}. 
\end{align}
Correlations of period 2 (where $\vec{p} \sim \vec{P}^{-T}\vec{Pp} $) are called polarities, and it can be seen they correspond to invertible symmetric and anti-metric matrices. Points that are mapped to hyperplanes incident with them are called self conjugate and satisfy $\scal{\vec{p}}{\vec{Pp}}=0$. For anti-metric matrices (null-polarities) all points are such, while for symmetric matrices if they exist they correspond to conics and quadrics, as follows.

\subsection{Conics and quadrics}
The classical mechanical subject at hand seems to require the more general approach to conics (and quadrics), which goes beyond ellipses parabolas and hyperbolas, even in the plane. The reader may find a detailed introduction to the planar case in \cite{richter2011perspectives}. With the help of bilinear forms we can have a bijection between conics (2D) and quadrics (higher dimensions) and equivalence classes of symmetric matrices. The conic $\mathcal{C}$ as a set of points is given as
\begin{align}
\mathcal{C}:=\{\vec{p}_{\sim} \ \vert \ \scal{\vec{p}}{\vec{Cp}}=0 \} \label{eq:bilin}
\end{align} (we prescribe $\vec{C}=\vec{C}^T$). The mechanical interpretation presented in Section \ref{sec:member} allows the zero matrix, to which the entire projective space corresponds. This has the benefit that we can use the vector-space nature of symmetric matrices.

Given collineation acting on points with matrix $\vec{P}$, conics transform as 
\begin{align}
\vec{C} \mapsto \vec{P}^{-T}\vec{CP}^{-1} \label{eq:trafo3}
\end{align}
and conics that can be transformed into each other are called projectively equivalent. In this setting not all conics are projectively equivalent, for instance to a positive definite $\vec{C}$  the bilinear form in \eqref{eq:bilin} has no real solutions and the empty set can not be projectively mapped into a circle. We will call such "invisible" conics complex conics and we will see how the energy principles of classical mechanics often lead to them. Another way to non-equivalent conics is if $\vec{C}$ degenerates, the different number of zero eigenvalues of $\vec{C}$ will also have a mechanical interpretation.

\subsection{Lines in 3D}
Beyond points and hyperplanes we will need the Pl\"ucker coordinates of lines in $PG(3)$. They are homogeneous coordinates, we will have equivalence classes of sextuples $(l_1 \dots l_6)_{\sim}$ that represent a line if and only if they satisfy
\begin{align}
l_1l_4+l_2l_5+l_3l_6=0. \label{eq:plücker}
\end{align}
As such equivalence classes can be considered points in $PG(5)$, we can imagine all lines as a subset in $PG(5)$. It can be seen that \eqref{eq:plücker} is in fact an equation of a quadric, meaning the set of all lines forms a quadric in $PG(5)$ called \emph{Klein quadric} (we will denote it with $\mathcal{Q}$). One can compute the effect of 3 dimensional correlations and collineations on Pl\"ucker coordinates, leading to a linear map $\mathbb{R}^6 \rightarrow \mathbb{R}^6$, which can be thought of as a correlation or collineation in $PG(5)$. We will rely on the following theorem linking the two:
\begin{theorem}[\cite{pottmann2001computational}]\label{thm:1}
Projective collineations and correlations of $PG(3)$ induce projective automorphisms of the Klein quadric, and the Klein quadric does not admit any other projective automorphisms.
\end{theorem}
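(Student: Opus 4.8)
The plan is to read a ``projective automorphism of the Klein quadric'' as a collineation of the ambient $PG(5)$ that carries $\m{Q}$ onto itself, and to prove the two halves separately: first that every collineation and correlation of $PG(3)$ produces such a map, and second that there are no others.

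For the forward direction, recall that a line of $PG(3)$ spanned by points $\vec{a}_\sim,\vec{b}_\sim$ has Pl\"ucker coordinates given by the $2\times 2$ minors of the $4\times 2$ matrix $[\,\vec{a}\ \vec{b}\,]$. A collineation with matrix $\vec{P}$ sends $\vec{a}\mapsto \vec{P}\vec{a}$ and $\vec{b}\mapsto\vec{P}\vec{b}$, and these minors then transform by the second compound (exterior) power $\wedge^2\vec{P}$, a linear map $\mathbb{R}^{6}\to\mathbb{R}^{6}$. Being linear it is a collineation of $PG(5)$, and since it sends the Pl\"ucker vector of every line to the Pl\"ucker vector of its image line, it maps $\m{Q}$ into $\m{Q}$; invertibility of $\vec{P}$ supplies surjectivity. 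A correlation sends a line (the join of two points) to a line (the meet of the two image planes), so it too maps lines to lines; the same minor computation yields a linear map of $\mathbb{R}^{6}$, now composed with the involution interchanging $(l_1,l_2,l_3)$ and $(l_4,l_5,l_6)$ that swaps a line's two coordinate triples. In both cases one checks directly that the Pl\"ucker relation is preserved, so these maps are automorphisms of $\m{Q}$.

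For the converse I would exploit the ruled structure of $\m{Q}$. The maximal projective subspaces contained in $\m{Q}$ are planes, and they fall into two families: the $\alpha$-planes, each consisting of all lines through a fixed point of $PG(3)$, and the $\beta$-planes, each consisting of all lines lying in a fixed plane of $PG(3)$. Thus $\alpha$-planes are in bijection with points of $PG(3)$ and $\beta$-planes with planes of $PG(3)$. The two families are told apart by their intersections: two planes of the same family always meet in a single point of $\m{Q}$, whereas an $\alpha$-plane and a $\beta$-plane meet in a line of $\m{Q}$ exactly when the corresponding point lies on the corresponding plane, and are disjoint otherwise. A collineation $\phi$ of $PG(5)$ fixing $\m{Q}$ permutes these generators, and since it preserves the dimension of every intersection it preserves the relation ``meet in a point''; hence it either fixes both families or swaps them. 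In the first case $\phi$ induces an incidence-preserving bijection between the points and planes of $PG(3)$, in the second an incidence-reversing one. By the fundamental theorem of projective geometry (over $\mathbb{R}$, whose only field automorphism is the identity, so that every semilinear map is already linear) the former is a collineation and the latter a correlation of $PG(3)$.

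The main obstacle is organising this converse cleanly: establishing that the generators are exactly the two families above, verifying the intersection dictionary that distinguishes them, and then invoking the fundamental theorem. A final bookkeeping step closes the argument. The collineation or correlation just extracted induces, by the forward direction, a map of $\m{Q}$ that agrees with $\phi$ on every generator and therefore on every point of $\m{Q}$; since the nondegenerate quadric $\m{Q}$ spans $PG(5)$ and contains a projective frame of it, two collineations of $PG(5)$ agreeing on $\m{Q}$ must coincide, so $\phi$ is precisely the induced map.
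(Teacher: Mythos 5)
The paper offers no proof of this statement to compare yours against: Theorem~\ref{thm:1} is quoted, with citation, from \cite{pottmann2001computational}, and the result is used as an external black box. Judged on its own merits, your proposal is the standard classical argument (essentially the one in the cited reference) and it is sound. The forward direction via the second compound map $\wedge^2\vec{P}$ is correct, including the observation that a correlation contributes the extra involution swapping $(l_1,l_2,l_3)$ and $(l_4,l_5,l_6)$ --- which visibly preserves the form $l_1l_4+l_2l_5+l_3l_6$, since that swap converts the ray coordinates of the meet of two image planes into dual (axis) coordinates. The converse via the two families of generator planes, their intersection dictionary, and the fundamental theorem of projective geometry over $\mathbb{R}$ (where semilinear reduces to linear) is the right skeleton, and you correctly flag that classifying the maximal subspaces of $\mathcal{Q}$ is the main labour.

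Two steps deserve explicit statement in a full write-up. First, the global consistency of the family assignment: if a collineation $\phi$ preserving $\mathcal{Q}$ sent one $\alpha$-plane to an $\alpha$-plane and another to a $\beta$-plane, the images would still meet in a single point, whereas an $\alpha$-plane and a $\beta$-plane meet in a line of $\mathcal{Q}$ or not at all --- contradiction; this is what licenses the dichotomy ``preserves both families or swaps them.'' Second, your final rigidity claim (``two collineations of $PG(5)$ agreeing on $\mathcal{Q}$ coincide'') is better argued directly than via a projective frame: if a linear map fixes every point of $\mathcal{Q}$ projectively, then every vector of the quadric cone is an eigenvector; the eigenspaces are finitely many linear subspaces, and the nondegenerate cone is neither contained in a finite union of proper subspaces nor in a proper subspace (it spans $\mathbb{R}^6$), so the map is scalar and the collineation is the identity. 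With these two points made explicit, your outline is a complete and correct proof of the theorem the paper imports.
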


\subsection{Posing the mechanical problem as a vector space}
We will look at structures with the usual assumptions of the Euler-Bernoulli beam theory. We can pose the arising mechanical problems such that both static and kinematic dynames can be elements of a respective vector space, we can associate a projective space to. The idea started from Sir Robert Ball's Screw Theory \cite{ball1900treatise}, the reader may find more in \cite{pottmann2001computational} or \cite{davidson2004robots}.

After a choice of coordinate system the effect of any force system can be given with a force vector $\vec{F}\in\mathbb{R}^3$ and a moment vector $\vec{M}\in\mathbb{R}^3$ with respect to the origin. We can combine them into a single vector $\vec{f}=(\vec{M},\vec{F})\in \mathbb{R}^6$.

For the kinematic dyname it may be useful to start from the better known instantaneous kinematics: the velocity state of a rigid body can be described with the vector pair $(\vec{\Omega},\vec{V})$ where $\vec{\Omega}\in \mathbb{R}^3$ is the angular velocity of the body as it rotated around an axis passing through the origin and $\vec{V}\in \mathbb{R}^3$ is the translational velocity of the origin. We can get the small displacement approximation we are going to use by letting the velocity state act for a small time, displacing each point in the direction of its velocity. The effect of this can be captured in the kinematic dyname $\vec{d}=(\vec{\Phi},\vec{\Delta})\in \mathbb{R}^6$, where $\vec{\Phi}\in \mathbb{R}^3$ describes rotation about an axis passing through the origin while $\vec{\Delta}\in \mathbb{R}^3$ describes the displacement of the origin.

The effect of these dynames can be represented by a single force or rotation if and only if the sextuple satisfies \eqref{eq:plücker}. In these cases we can think of these mechanical quantities as line representants, and the line they represent is the line of action of the force, or the axis of rotation. (In both cases it might be an ideal line at infinity, corresponding to moments in the static and translations in the kinematic case.)\\

Displacement $\vec{d}$ will be treated as a point in $PG(5)$, given by the equivalence class $\vec{d}_\sim$. This is nothing else then the Klein-embedding of lines into 5 dimensions, extended to kinematic dynames not reducible to a single rotation.

Static dyname $\vec{f}$ will be identified with a 4 dimensional hyperplane of $PG(5)$ given by equivalence class $\vec{f}_\sim$. This is dual to the usual Klein embedding, dyname $\vec{f}$ is reducible to a single force if the corresponding hyperplane is a tangent hyperplane of the Klein quadric.

In what follows we will give the relation of the mechanical properties and will treat them directly as points or hyperplanes with the equivalence signs neglected.

\section{Structural parts as conics}
In numerical analysis it is typical to decompose complex structures into pieces with known behaviours. In case of frames these known elements are usually linear, rods having a defined axis and connecting joints or vertices of the structure. To each known element or in certain cases sets of elements corresponds a stiffness matrix. Here we will show how these matrices can be considered conics and how these conics contain geometrical information relevant to the forces and displacements involved. The stiffness matrix $\vec{K}$ connects displacements and forces as 
\begin{align}
\vec{f}=\vec{Kd}.
\end{align}
Since displacements are identified with points of $PG(5)$ while forces with $4$ dimensional subspaces, if $\vec{K}$ is invertible we will interpret this as a correlation and denote it with $\kappa$. In case of a degenerate matrix the map from points to hyperplanes is meaningful, but the entire incidence structure of $PG(5)$ is not preserved. The set
\begin{align}
\mathcal{K}:=\{\vec{d} \ \vert \ \scal{\vec{d}}{\vec{Kd}}=0 \}
\end{align}
will be called the corresponding \emph{stiffness conic} or \emph{stiffness quadric}. In contrast the name \emph{elasticity ellipse / ellipsoid / quadric} will be used for shapes corresponding to a correlation different then $\kappa$. When using the elasticity ellipse the difference is corrected by adding a geometric operation (typically mirroring) before or after the correlation given by the elasticity quadric.

We will examine a structural member first, then describe the effect of different support conditions and how structures can be built from members. This will be followed by visualization methods and a few results on projective transformations of the mechanical systems.

\subsection{Stiffness conics of a structural member}\label{sec:member}
Consider a rod joining vertices $i$ and $j$! If vertex $j$ is displaced relatively to vertex $i$ with $\vec{d}_{i,j}$, 
force $\vec{f}_{ij}$ will act on the $j$ end of the rod and on vertex $i$ while force $\vec{f}_{ji}=-\vec{f}_{ij}$ will act on vertex $j$ and on the $i$ end of the rod. We can describe the stiffness of the rod with a stiffness matrix as
\begin{align}
\vec{f}_{ij}=\vec{Kd}_{ij}.
\end{align}
We know, that $\vec{K}=\vec{K}^T$ due to Betti's theorem \cite{roller_szabo}, implying that all it's eigenvalues are real and there is at least one eigenvector to each. Furthermore, no negative eigenvalue is possible: if $\vec{K}\vec{v}=\lambda \vec{v}$ existed with $\lambda<0$, we could consider end $i$ clamped and end $j$ free for the moment and apply force $\vec{f}_{ij}=\lambda \vec{v}$ on end $j$, resulting in displacement $\vec{d}_{ij}=\vec{v}$. The own work of the force on the displacement it caused would be $\frac{1}{2}\scal{\vec{f}_{ij}}{\vec{d}_{ij}}=\frac{1}{2}\scal{\vec{v}}{\vec{K}\vec{v}}=\frac{\lambda}{2}\norm{\vec{v}}^2$, which can not be negative. This implies that equation $\scal{\vec{d}_{ij}}{\vec{K} \vec{d}_{ij}}=0$ is either never satisfied in a real vector-space ($\vec{K}$ is positive definite) or all solutions are inside the kernel of $\vec{K}$ ($\text{ker}(\vec{K})$). Geometrically speaking $\mathcal{K}$ is either a complex conic not appearing in real projective space, or a degenerate conic corresponding to a projective subspace. 

The number of zero eigenvalues and the dimensionality of $\text{ker}(\vec{K})$ depends on the supporting conditions on the ends of the rod. On the displacement side $\mathcal{K}$ is precisely the set of displacements that can happen with no arising forces. On the side of forces any non-zero force $\vec{f}_{ij}$ must be in the image space of $\vec{K}$ ($\text{im}(\vec{K})$). Since for symmetric matrices  $\text{im}(\vec{K})=\text{ker}(\vec{K})^\perp$ holds, we have
\begin{align}
\vec{f}_{ij}\neq 0 \implies \vec{f}_{ij}^T\vec{d}=0 \ \forall \vec{d}\in \mathcal{K},
\end{align}  
that is any force arising from relative displacement $\vec{d}_{ij}$ must be incident with \emph{all} points of the stiffness conic. It is not hard to see, that as the supports become less strict $\text{ker}(\vec{K})$ grows in dimension, less types of forces are possible leading to stricter incidence conditions given by $\mathcal{K}$. This is more pronounced in the case of planar problems, where degenerate conics directly appear in relation to the geometry of the structures. A few examples illustrating this are presented in Appendix \ref{app:ex} and \ref{app:duex}.

\subsection{Combined effect of members}
One use of tying our structural elements to coordinate-free matrices forming a vector-space is that we may use their linear combinations, representing the combined effect of these members. The idea of an elasticity ellipse of a set of elements is not new, we may find it in the works of Culmann and Richter\cite{anwendungen_egyben}. They give elasticity ellipses for cells of trusses (Fach) considering different geometries. One could generalize their method of testing the structure to appropriately chosen displacements, but we are in a better position thanks to the linear algebraical treatment of conics.\\

Consider two rigid bodies $a$ and $b$, connected by a set of elements numbered $i \in \{1 \dots n\}$, with corresponding stiffness conics $\vec{K}_i$. Given relative displacement $\vec{d}_{ab}$ the force from the displacement in each element is $\vec{f}_i=\vec{K}_i\vec{d}_{ab}$ (acting on body $a$). As the total force acting on body $a$ is 
\begin{align}
\sum_i \vec{f}_i=\sum_i \left(\vec{K}_id_{ab}\right)=\left(\sum_i \vec{K}_i\right)\vec{d}_{ab}
\end{align}   
we have deduced that the stiffness conic of the combined elements is the sum of the stiffness conics of the parts.\\

All the things stated for the stiffness conics of members can be stated for stiffness conics of their sums. The sum of positive (semi-) definite matrices will be positive (semi-) definite and the types of conics corresponding to sums of parts will be the same as in the case of the members. Adding more members to a structure will decrease the dimensionality of $\text{ker}(\vec{K})$ and thus $\mathcal{K}$, implying a looser incidence condition on the forces.

\subsection{Visualization of the non-degenerate case}
In a lot of cases we build structures that resist all types of motion and their stiffness conics are complex - invisible in real projective space. A way around this is given by the idea of elasticity ellipses \cite{culmann1875graphische}, giving a graphical way to construct points of displacements and lines of forces from each other. We will extend this idea to the case of spatial forces and displacements, resulting in a 5 dimensional elasticity quadric.  We will again consider the case of a single member to have a concrete example, the arguments except for the 3D visualization part generalize as provided in the previous subsection.\\

\subsubsection{The 5 dimensional elasticity quadric}
Consider a rod of length $L$ joining vertices $i$ and $j$! Let us pick the coordinate system such that the origin is in the midpoint of the rod, let the rod be parallel with the $x$ axis and let $y$ and $z$ be the principial directions of its cross-section. We will denote the area of the cross section with $A$, the principial inertia moments with $I_y$ and $I_z$ and the polar inertia of the cross-section with $I_x$. The Young-modulus will be denoted with $E$, the shear modulus with $G$. It is given in \cite{Livesley}, how in this coordinate system the mechanical behaviour of a "straight, uniform" member gives the map:
\begin{align}
\vec{f}_{ij}=\text{diag}\left(GI_x/L,EI_y/L,EI_z/L,EA/L,12EI_z/L^3,12EI_y/L^3\right)\vec{d}_{ij}\label{eq:mechmap}
\end{align}
where diag(  ) is shorthand for diagonal matrix. In engineering books there are a number of tacit or explicit assumptions involved, when "straight, uniform" or "homogeneous" members are used. To avoid confusion a formal definition is presented what the paper will mean under straight uniform rods:

\begin{definition}[straight uniform rod]\label{def:1}
A line segment with cross-sections as fictitious rigid objects corresponding to each point on the line segment, connected with neighbouring cross sections elastically. In the stress-free case the location of the centroids of the cross sections as well as their sizes and orientations have to change in a continuous way, when considered as functions over the line segment. The relative motion of cross sections at one endpoint and an internal point of the axis (the line of the line-segment)
\begin{enumerate}[label=\roman*)]
\item under axial force is  pure translation in the axial direction with magnitude proportional to the length of line segment between the points.
\item under torsion is rotation around the axis with magnitude proportional to the length of the line segment between the points.
\item under pure bending moment is rotation around an axis the direction of which is independent of the length of the line segment between the points. The magnitude of the rotation is proportional to the length of the line segment between the points. 
\end{enumerate}
\end{definition}

\begin{remark}
Shear forces and deformations perpendicular to the axis of the rod are missing as the Euler-Bernoulli beam theory neglects shear deformations. The effect of shear forces is captured in the fact they cause bending moments. The displacements orthogonal to the axis arise from the relative rotations of the cross-sections of the rod.
\end{remark}

\begin{remark}
This definition is stricter than the rod having the same cross-section everywhere, as the centroid and shear-center of the cross-sections have to coincide.
\end{remark}

In order to visualise the mechanical behaviour given in \eqref{eq:mechmap}, let us introduce the collineation $\tau$ represented by matrix $\vec{T}$ and the correlation $\underline{\kappa}$ represented by $\underline{\vec{K}}$, where:
\begin{align}
\vec{T}:=&\text{diag}(-1,-1,-1,1,1,1)\\
\underline{\vec{K}}:=&\vec{KT}.
\end{align}
We can see, that $\vec{T}=\vec{T}^{-T}$, and $\vec{K}=\vec{T}\underline{\vec{K}}=\underline{\vec{K}}\vec{T}$, meaning we have a construction similar to the elasticity ellipse of Culmann. As such, we will call the set
\begin{align}
\mathcal{\underline{K}}:=\{\vec{d} \ \vert \ \scal{\vec{d}}{\underline{\vec{K}}\vec{d}}=0 \}\label{eq:equadric}
\end{align}
the elasticity quadric of the member.

\begin{remark}
This is not the only way to visualize $\vec{K}$, even in the planar case if we mirrored with respect to a line we would have an elasticity hyperbola and not an ellipse. The map $\tau$ has been selected because it seems the most consistent with earlier works of Culmann and Ritter, while having the property that the Klein quadric is invariant under it. In fact it can be interpreted as an action on 3 dimensional lines, mirroring them with respect to the centre of the coordinate system.
\end{remark}

\subsubsection{Notable sections of the elasticity quadric}
Recall the radii of inertia being defined as
\begin{align}
i_y:=\sqrt{\frac{I_y}{A}} \ \text{and} \ i_z:=\sqrt{\frac{I_z}{A}}.
\end{align} 
Let us restrict ourself to the $\Phi_x=\Phi_y=\Delta_z=0$ subspace (which is $PG(2)$) and adopt the drawing convention that the Euclidean (finite) points are represented with vectors satisfying $\Phi_z=1$. (Other displacements that are a scalar multiple of this appear on the same place on the projective plane.) We can use the radii of inertia and multiply \eqref{eq:equadric} with $\frac{L}{EA i_z^2}$ giving an equivalent equation of $\mathcal{\underline{K}}$ (restricted to this subspace) as
\begin{align}
-\Phi_z^2+\frac{\Delta_x^2}{i_z^2}+12\frac{\Delta_y^2}{L^2}=0
\end{align}
which is nothing else then the elasticity ellipse of classical planar graphic statics. A similar observation can be made in the $\Phi_x=\Phi_z=\Delta_y=0$ subspace. The mechanical problem appears rotated with $\pi/4$ in these planes, due to the way we represent cross products with scalar products.\\

Furthermore, in the  $\Phi_x=\Delta_y=\Delta_z=0$ subspace drawn with finite points corresponding to $\Delta_x=1$ the section is a dual ellipse of the usual ellipse used for graphically constructing the core of a cross-section. This can be seen by multiplying \eqref{eq:equadric} with $\frac{L}{EA}$ giving 
\begin{align}
-\Phi_y^2 i_y^2-\Phi_z^2 i_z^2+1=0.
\end{align} 
The classic graphic construction connects points of attack of forces to the neutral axes, where the stresses are zero. It is not hard to see if we consider an infinitely short rod the relative motion of the endpoints will describe the relative motion of two neighbouring rigid plates of the rod model and the neutral axis of stresses is the axis of rotation. In this setting forces are identified with points and displacements with lines, which is the dual of our setting, hence the dual ellipse (see \cite{richter2011perspectives}) in the  $\Phi_x=\Delta_y=\Delta_z=0$ subspace. The primal ellipse of Culmann in this case would correspond to $\vec{K}^{-1}$, and the coordinate system is again rotated with $\pi/4$.

Another type of problems where we may only consider a 3 dimensional subspace is the behaviour of grillages. In Appendix \ref{app:duex} a few examples are provided, with elasticity ellipses again appearing as dual ellipses.

\subsubsection{Three dimensional graphic representation}
As we typically see forces and rotations having lines of action in $PG(3)$ and would like to interpret the mechanical behaviour in 3 dimensions instead of 5, two natural questions arise:
\begin{enumerate}[label=\roman*)]
\item Can we represent the behaviour of the member with a 3 dimensional quadric?
\item Is it true that kinematic dynames reducible to a rotation around an axis in $PG(3)$ are mapped to static dyname reducible to a force having a line of action in $PG(3)$?
\end{enumerate}

We know from Theorem \ref{thm:1} that geometrically speaking these two questions are equivalent if we consider all possible lines with corresponding static and kinematic dynames. We will show that there are rods for which the answer to these two questions is unconditionally "yes". We will also show that for all rods the answer "it depends" is also applicable, implying a condition on the geometry of lines involved.\\

Let us embed the Euclidean space into $PG(3)$ such that finite points have a representant of shape $(x,y,z,1)$, while ideal points of shape $(x,y,z,0)$! To each rod we can create a polarity $\kappa_3$ represented by 
\begin{align}
\vec{K}_3:=\text{diag}\left(\frac{12}{L^2},i_z^{-2},i_y^{-2},1\right)
\end{align}
acting on homogeneous coordinates of $PG(3)$. The corresponding 3 dimensional quadric $\mathcal{K}_3$ is again complex. We can do the same as we did in 5 dimensions and introduce $\tau_3$ to be a mirroring with respect to the centre of the coordinate system, and another polarity defined as $\underline{\kappa}_3:=\kappa_3 \circ \tau_3$ (mirroring first, but in this particular case the order is irrelevant). Note, how the effect of $\tau_3$ on lines is the restriction of $\tau$ to the Klein quadric.
 
It is easy to see that the self conjugate points of $\underline{\kappa}_3$ form an ellipsoid $\underline{\mathcal{K}}_3$ with equation 
\begin{align}
-\frac{12}{L^2}x^2 -\frac{y^2}{ i_z^2}-\frac{z^2}{ i_y^2}+1=0
\end{align} 
which can be used to visualize and graphically construct the effect of $\kappa_3$. With this, we can more formally give answers to the aforementioned questions, as:

\begin{theorem}\label{thm:repi}
For static and kinematic dynames reducible to lines correlation $\kappa_3$ gives correct lines of action 
\begin{enumerate}[label=\roman*)]
\item for all lines if and only if the mechanical properties of the rod satisfy $GI_x=12\frac{EAi_y^2i_z^2}{L^2}$
\item for all rods, if and only if the dynames satisfy $\scal{\vec{d}}{\vec{l}_1}=0 \iff \scal{\vec{f}}{\vec{l}_1}=0$ or\\ $\vec{d}=\lambda_d \vec{l}_1 \iff \vec{f}=\lambda_f \vec{l}_1$ with $\vec{l}_1=(1,0,0,0,0,0)$ and some $\lambda_d,\lambda_f \in \mathbb{R}\setminus \{ 0 \}$.
\end{enumerate}
\end{theorem}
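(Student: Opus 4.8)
The plan is to make precise what ``gives correct lines of action'' means and then reduce the theorem to comparing two diagonal $6\times6$ matrices. A kinematic dyname $\vec d$ reducible to a line lies on the Klein quadric $\m{Q}$ and represents the rotation axis $\ell_d$; by the discussion following Theorem~\ref{thm:1}, the static dyname $\vec f=\vec{Kd}$ is reducible to a force exactly when it lies on the dual Klein quadric, and its line of action is then $\kappa_3(\ell_d)$ precisely when $\vec f$ agrees projectively with the vector obtained by letting $\kappa_3$ act on lines. So I would introduce the $6\times6$ matrix $\widehat{\vec K}_3$ describing the induced action of $\kappa_3$ on Pl\"ucker coordinates and declare that $\kappa_3$ gives the correct line of action for $\vec d$ iff $\vec{Kd}\sim\widehat{\vec K}_3\vec d$. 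Everything then follows from understanding $\widehat{\vec K}_3$.

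First I would fix the dictionary between dyname coordinates $(\vec\Phi,\vec\Delta)$ and the Pl\"ucker coordinates $l_{ij}$ of $\ell_d$: writing the axis as the join of a finite point and the ideal point in its direction shows that $\vec\Phi$ is, up to signs, the triple $(l_{14},l_{24},l_{34})$ and $\vec\Delta$ the complementary triple $(l_{23},l_{13},l_{12})$, and that the Pl\"ucker relation \eqref{eq:plücker} is exactly $\vec\Phi\cdot\vec\Delta=0$. Next I would compute $\widehat{\vec K}_3$: since $\kappa_3$ sends a line to the intersection of the images of two spanning points, its dual Pl\"ucker coordinates are $a_ia_j\,l_{ij}$, where $\vec K_3=\text{diag}(a_1,\dots,a_4)$, and converting back to primal coordinates through the complementary-index (Hodge) duality yields a diagonal matrix. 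Carried through, this gives $\widehat{\vec K}_3=-\text{diag}\!\left(\tfrac{12}{L^2},\,\tfrac{A}{I_z},\,\tfrac{A}{I_y},\,\tfrac{A^2}{I_yI_z},\,\tfrac{12A}{L^2I_y},\,\tfrac{12A}{L^2I_z}\right)$ in the ordering of \eqref{eq:mechmap}.

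With both diagonal matrices in hand the two parts reduce to comparing the entrywise ratios $r_i=(\vec K)_{ii}/(\widehat{\vec K}_3)_{ii}$. A short computation shows $r_2=r_3=r_4=r_5=r_6=-EI_yI_z/(LA)$, while $r_1=-GI_xL/12$ stands apart. For part i) the requirement $\vec{Kd}\sim\widehat{\vec K}_3\vec d$ for every line forces all ratios to agree, i.e.\ $r_1=r_2$, which rearranges to $GI_x=12EAi_y^2i_z^2/L^2$; the same threshold is obtained more cheaply by imposing that $\vec f=\vec{Kd}$ satisfy the dual Pl\"ucker relation $\vec M\cdot\vec F=0$ for all $\vec d$ with $\vec\Phi\cdot\vec\Delta=0$, which makes the coefficient $\tfrac{12E^2I_yI_z}{L^4}-\tfrac{GI_xEA}{L^2}$ vanish. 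For part ii), for a rod with $r_1\neq r_2$ the vectors $\vec{Kd}$ and $\widehat{\vec K}_3\vec d$ are parallel iff $\vec d$ is supported on indices sharing one ratio, i.e.\ either $(\vec d)_1=0$ or $(\vec d)_2=\dots=(\vec d)_6=0$. Since $(\vec f)_1=(GI_x/L)(\vec d)_1$ and $\vec l_1$ spans an eigenline of the diagonal $\vec K$, these translate exactly into $\scal{\vec d}{\vec l_1}=0\iff\scal{\vec f}{\vec l_1}=0$ and $\vec d=\lambda_d\vec l_1\iff\vec f=\lambda_f\vec l_1$.

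The main obstacle I anticipate is the second step, pinning down $\widehat{\vec K}_3$ with the correct indices and signs. Theorem~\ref{thm:1} guarantees at the conceptual level that whenever $\vec K$ carries $\m{Q}$ into the dual quadric it is induced by a $PG(3)$ correlation, but identifying that correlation as $\kappa_3$ rather than $\underline\kappa_3$ requires care with the primal/dual Pl\"ucker duality, since a wrong sign pattern there would spuriously mix the two ratio-blocks. I would guard against this by checking that the uniform-sign $\widehat{\vec K}_3$ is genuinely proportional to the all-positive $\vec K$ under the condition of part i) — any mismatch would immediately expose a bookkeeping error — and by confirming that the lift of $\underline\kappa_3=\kappa_3\circ\tau_3$ instead introduces the sign flips of $\vec T$ on three coordinates and so cannot be proportional to $\vec K$, which is exactly what singles out $\kappa_3$ in the statement.
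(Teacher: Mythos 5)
Your proposal is correct and takes essentially the same route as the paper: both lift $\kappa_3$ to its induced diagonal correlation on Pl\"ucker coordinates (your $\widehat{\vec{K}}_3$ is, up to a projectively irrelevant overall sign, exactly the matrix of $\kappa_g$ in \eqref{eq:graphmap}, which the paper derives via the cross-product identity $(\vec{A}\vec{a})\times(\vec{A}\vec{b})=\det(\vec{A})\vec{A}^{-T}(\vec{a}\times\vec{b})$ rather than your $a_ia_j l_{ij}$ complementary-index bookkeeping) and then compare entrywise ratios against \eqref{eq:mechmap}, yielding the same threshold $GI_x=12EAi_y^2i_z^2/L^2$. Your explicit support analysis for part ii) simply fills in the detail the paper dispatches with ``comparing equations \eqref{eq:mechmap} and \eqref{eq:graphmap}''.
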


\begin{proof}
In order to show $i)$ we have to show that the correlation induced by $\kappa_3$ in $PG(5)$ is $\kappa$, that is the linear maps describing the two 5 dimensional correlations are scalar multiples of each other. Given points $(p_x,p_y,p_z,1)=(\vec{p},1)$ and $(q_x,q_y,q_z,1)=(\vec{q},1)$, Pl\"ucker coordinates of lines passing through them can be calculated (see \cite{pottmann2001computational}) as 
\begin{align}
(1\vec{p}-1\vec{q},\vec{q}\times\vec{p}).
\end{align} 
Polarity $\kappa_3$ maps points $(\vec{p},1)$ and $(\vec{q},1)$ into planes $\text{diag}(12/L^2,i_z^{-2},i_y^{-2})(\vec{p},1)$ and $\text{diag}(12/L^2,i_z^{-2},i_y^{-2})(\vec{q},1)$.
The intersection line of these planes has Pl\"ucker coordinates
\begin{align}
\left(\text{diag}(12/L^2,i_z^{-2},i_y^{-2})\vec{q}\times \text{diag}(12/L^2,i_z^{-2},i_y^{-2})\vec{p},\text{diag}(12/L^2,i_z^{-2},i_y^{-2})(1\vec{p}-1\vec{q})\right). \label{eq:plsikmetszet}
\end{align}
Using that for any invertible $3\times 3$ matrix $\vec{A}$ and vectors $\vec{a}$ and $\vec{b}$
\begin{align}
(\vec{A}\vec{a})\times(\vec{A}\vec{b})=\text{det}(\vec{A})\vec{A}^{-T}(\vec{a}\times\vec{b})
\end{align}
holds the expression in \eqref{eq:plsikmetszet} turns into:
\begin{align}
\left(\text{diag}(i_y^{-2}i_z^{-2},12i_y^{-2}/L^2,12i_z^{-2}/L^2)(\vec{q}\times\vec{p}),\text{diag}(12/L^2,i_z^{-2},i_y^{-2})(1\vec{p}-1\vec{q})\right).
\end{align}
As such, the correlation induced by $\kappa_3$ in $PG(5)$ can be described by the linear map
\begin{align}
\kappa_g: \vec{d} \mapsto \vec{f}=\text{diag}\left(12/L^2,i_z^{-2},i_y^{-2},i_y^{-2}i_z^{-2},12i_y^{-2}/L^2,12i_z^{-2}/L^2\right)\vec{d}.\label{eq:graphmap}
\end{align}
Comparing this diagonal matrix to the one in expression \eqref{eq:mechmap}, we see that they are scalar multiples of each other if and only if
\begin{align}
\frac{GI_x}{L}&=12\frac{EAi_y^2i_z^2}{L^3}\label{eq:feltetel}
\end{align} holds.

The validity of $ii)$ can also be seen by comparing equations \eqref{eq:mechmap} and \eqref{eq:graphmap}.
\end{proof}

The 3 dimensional elasticity ellipsoid does directly contain both ellipses of classical graphic statics, and the planar constructions can be interpreted as the intersection of spatial constructions and an image plane, as illustrated in Figure \ref{fig:spatial}. The 3 dimensional ellipsoid itself was known in the late 1800's \cite{culmann1875graphische} but the spatial use of it does not seems widespread. This may be due to the technical restrictions of that time, in our days computers will solve the required operations on homogeneous coordinates in fractions of a second.\\

We can also interpret Theorem \ref{thm:repi} in 5 dimensions, relating the elasticity quadric $\underline{\mathcal{K}}$ to the Klein-quadric $\mathcal{Q}$. According to the $i)$ part and the fact that $\mathcal{Q}$ is invariant under $\tau$ iff condition $GI_x=12\frac{EAi_y^2i_z^2}{L^2}$ is satisfied, $\underline{\mathcal{K}}$ is such that its tangent hyperplanes  at the intersection $\underline{\mathcal{K}} \cap \mathcal{Q}$ also touch $\mathcal{Q}$, although not necessary at the same point. The set $\mathcal{L}:=\{ \vec{d} \ \vert \ \vec{d} \in \mathcal{Q} \text{ and }  \scal{\vec{d}}{\vec{l}_1}=0 \}$ is a special linear complex, the intersection of the Klein-quadric with one of its tangent hyperplanes. According to the $ii)$ part regardless of the torsion stiffness the tangent hyperplanes of $\underline{\mathcal{K}}$ at the intersection $\underline{\mathcal{K}} \cap \mathcal{L}$ also touch $\mathcal{Q}$. (Point $\vec{l}_1$ is not on $\underline{\mathcal{K}}$.)

\begin{figure}[h]
\centering
\includegraphics[width=0.5\textwidth]{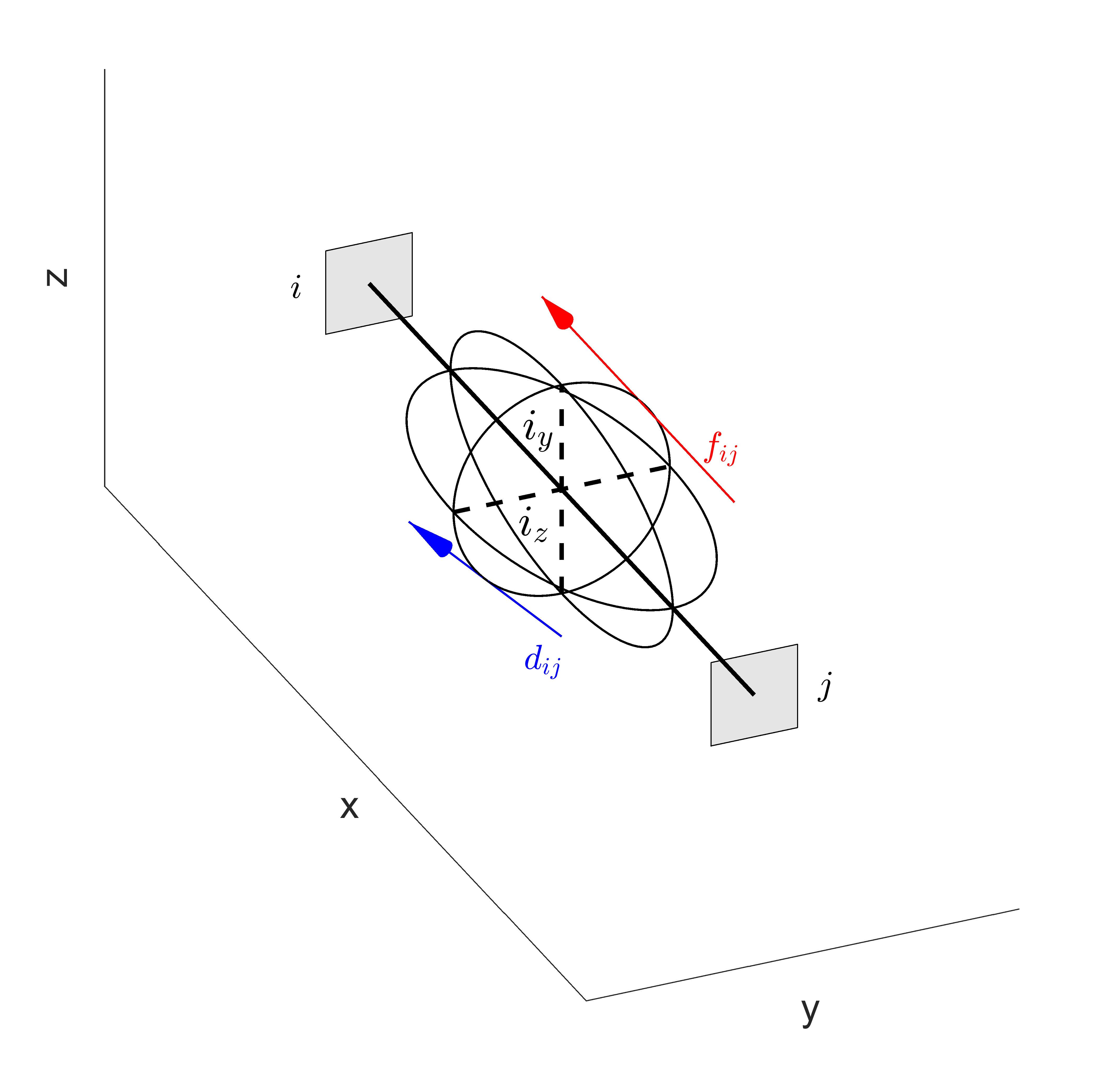}\includegraphics[width=0.5\textwidth]{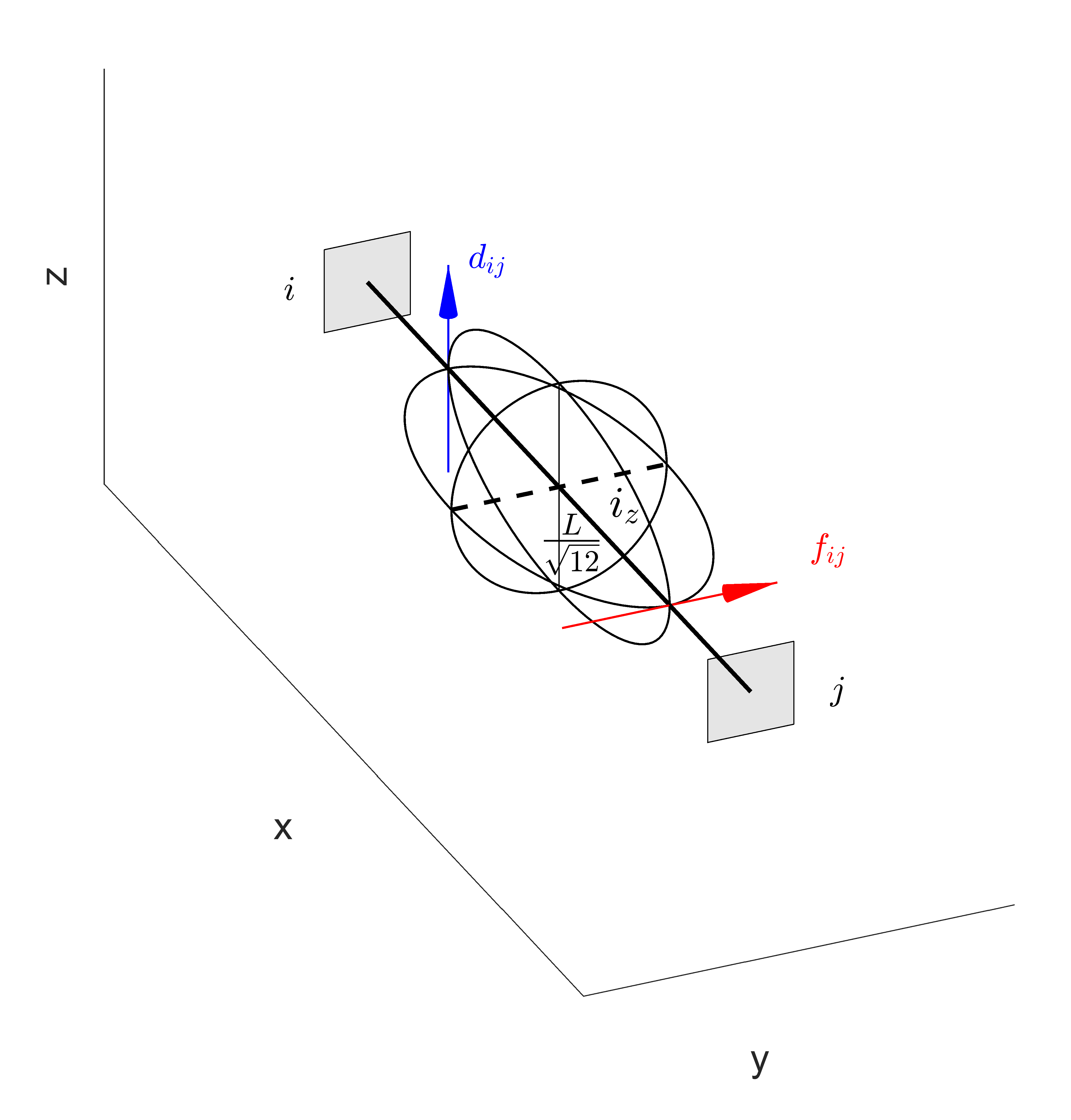}
\caption{Spatial interpretation of previous planar ellipses of graphic statics. Left: choosing the $x=0$ plane as an image plane the ellipse of the cross section appears connecting (intersection) points of attack of forces to (intersection) lines of attack of rotations -the neutral axes in case of cross-sections. Right: choosing the $y=0$ plane as image plane the elasticity ellipse of planar graphic statics appears, connecting (intersection) points of rotation to (intersection) lines of action of forces}\label{fig:spatial}
\end{figure} 

\subsection{Transformations}
As our mechanical properties are tied to the geometry we can take the projective transform of an entire mechanical system, with linear maps \eqref{eq:trafo1},\eqref{eq:trafo2} and\eqref{eq:trafo3}. In practice we are interested in 3 or 2 dimensional collineations but they can be analysed through the 5 dimensional collineations they induce. While these relations define a matrix equivalence class $\vec{P}_{\sim}$ for collineation $\pi$, we have additional requirements that will narrow down the possibilities. Yet, we will see that there is no unique way to transform the mechanical system corresponding to a projective change in geometry.  To show this let us chose a fixed representant $\vec{P}\in \vec{P}_{\sim}$ to describe the effect of $\pi$ on the mechanical system. The system preserves compatibility of displacements with each other if for all $\vec{d}_1$ and $\vec{d}_2$
\begin{align}
\pi(\vec{d}_1+\vec{d}_2)=\pi(\vec{d}_1)+\pi(\vec{d}_2)
\end{align} 
holds, implying 
\begin{align}
\pi(\vec{d})=\lambda_d \vec{Pd} \quad \text{ for a fixed }\lambda_d \neq 0.\label{eq:trd}
\end{align} 
Similarly, in order to preserve static equilibrium, the transformations need to satisfy:
\begin{align}
\pi(\vec{f}_1+\vec{f}_2)=\pi(\vec{f}_1)+\pi(\vec{f}_2) \quad \forall \vec{f}_1,\vec{f}_2
\end{align}implying\begin{align}
\pi(\vec{f})=\lambda_f \vec{P}^{-T}\vec{f} \quad \text{ for a fixed }\lambda_f \neq 0. \label{eq:trf}
\end{align}
In order to be able to take sums of stiffness conics we need
\begin{align}
\pi(\vec{K}_1+\vec{K}_2)=\pi(\vec{K}_1)+\pi(\vec{K}_2) \quad \forall \vec{K}_1,\vec{K}_2 \end{align}implying\begin{align}
\pi(\vec{K})=\lambda_K \vec{P}^{-T}\vec{KP}^{-1} \quad \text{ for a fixed }\lambda_K \neq 0. \label{eq:trk}
\end{align}
Finally, in order to preserve the compatibility of forces and displacements we need
\begin{align}
\pi(\vec{Kd})=\pi(\vec{K})\pi(\vec{d}) \quad \forall \vec{K},\vec{d} \implies
\lambda_K=\frac{\lambda_f}{\lambda_d}.
\end{align}
The effect of $\lambda_K$ can be considered as scaling the Young moduli $E$ of the materials involved, as it is the only linear term present in all the stiffness matrices ($G$ can be expressed from $E$ using Poisson's ratio). In the end we can choose two of the three $\lambda$ values freely.\\

Whether the given transformation corresponding to collineation $\pi$ makes sense or not may depend on the problem at hand. A deep categorization of problems in this respect is left for another occasion, we only present the following theorem providing a safely usable subset of transformations. (The paper uses the usual categorization of transformations: Euclidean $\subset$ similarity $\subset$ affine $\subset$ projective.)

\begin{theorem}\label{thm:trafo}
Exactly similarity transformations preserve the straight uniform property of the rods.
\end{theorem}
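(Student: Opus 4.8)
The plan is to establish the two inclusions separately: every similarity preserves the straight uniform property, and conversely every property-preserving collineation is a similarity. I would work with the $3$-dimensional collineation $\pi$ acting on the embedded Euclidean space, recalling that the mechanical quantities then transform by \eqref{eq:trd}--\eqref{eq:trk}, and that by Definition \ref{def:1} a straight uniform rod is a straight segment carrying a uniform cross-section whose principal frame---the rod axis together with the two principal cross-sectional directions---is orthonormal (so that the stiffness takes the diagonal form \eqref{eq:mechmap}), with each of the three responses i)--iii) proportional to the length of the considered sub-segment.

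For the forward direction I would write a similarity as $\vec{x}\mapsto c\,\vec{Qx}+\vec{b}$ with $c>0$ and $\vec{Q}$ orthogonal. It sends lines to lines, so the image of the segment is again a segment; it scales every length by the single factor $c$, so the three proportionalities survive with rescaled constants; and it preserves angles, so the orthonormal principal frame maps to an orthonormal frame and the coincidence of centroid and shear-centre is retained. Thus the image is again a straight uniform rod.

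The converse is the substantive part and I would split it in two. First, $\pi$ must be affine: the decisive feature of i)--iii) is that the response is \emph{proportional} to sub-segment length, whereas a general collineation preserves only cross-ratios, reparametrising arc length along the rod by a M\"obius function which is linear exactly when the point at infinity of the rod's line is fixed. Demanding that this linear proportionality persist for rods in arbitrary position forces every ideal point to be fixed, hence $\pi$ to be affine, $\vec{x}\mapsto \vec{Mx}+\vec{b}$. Second, the linear part $\vec{M}$ must be conformal: for the image to be straight uniform its principal frame must again be orthogonal---the new axis perpendicular to the new cross-sectional plane and the two new principal directions perpendicular to each other---otherwise the decoupling underlying the diagonal form \eqref{eq:mechmap} and the direction-independence demanded in iii) break down. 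Since a structure may carry rods in every orientation, $\vec{M}$ must map \emph{every} orthonormal frame to an orthogonal one, and a linear map with this property is a scalar multiple of an orthogonal matrix, $\vec{M}=c\,\vec{Q}$. Together with the first step this exhibits $\pi$ as a similarity.

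The main obstacle I anticipate is making the second step airtight: one must argue that orthogonality of the transformed principal frame is genuinely \emph{forced} by Definition \ref{def:1}---through the decoupling of the load cases and condition iii), for rods of every orientation---and then invoke the fact that a linear map preserving orthogonality of all frames is conformal. Some care is also needed to check that the induced transformations \eqref{eq:trd}--\eqref{eq:trk} of the mechanical quantities are consistent with reading off the transformed stiffness in the form \eqref{eq:mechmap}, so that the straight uniform property is tested on the correct transformed object.
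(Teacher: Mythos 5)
Your plan is sound and, on the converse half, actually goes further than the paper itself. The paper's proof establishes only the forward inclusion in detail: it splits the rod at an interior point $k$, applies an axial force, a torque and a pure bending moment in turn, and verifies proportionality through ratios of own works $\scal{\vec{f}}{\vec{K}^{-1}\vec{f}}$ --- a ratio which, as in \eqref{eq:aranyos}, is invariant under \emph{any} invertible map $\vec{f}\mapsto\vec{S}^{-T}\vec{f}$, $\vec{K}\mapsto\vec{S}^{-T}\vec{K}\vec{S}^{-1}$ --- combined with the affine preservation of length ratios along a line and the angle preservation of similarities to keep the displacement directions correct. That computation is precisely the ``consistency check'' you flag in your closing paragraph: the straight uniform property must be tested on the transformed stiffness acting on the transformed dynames via \eqref{eq:trd}--\eqref{eq:trk}, not merely on the geometric image of the segment, and the paper does this explicitly where your forward direction argues more softly from scaling and angle preservation. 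Your two-step converse (cross-ratio/M\"obius distortion forcing affinity, then conformality of the linear part) is a genuine addition: the paper dispatches the ``exactly'' direction with a single remark that more general affine transformations would not preserve the axial direction of the displacements, so your proposal supplies an argument the paper only gestures at.

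One step of your converse needs tightening. Definition \ref{def:1} is behavioural --- it never mentions an orthonormal principal frame --- so you cannot simply postulate that the image frame ``must again be orthogonal''; the failure has to be derived from conditions i)--iii) themselves. The clean mechanism is the one hiding in the paper's remark: a translation in direction $\vec{a}$ is represented by the ideal line common to all planes orthogonal to $\vec{a}$, and since planes transform contragrediently, an affine map with linear part $\vec{M}$ carries this translation to one in direction $\vec{M}^{-T}\vec{a}$, while the rod axis maps to direction $\vec{M}\vec{a}$. Condition i) for the image rod therefore forces $\vec{M}^{-T}\vec{a}\parallel\vec{M}\vec{a}$ for every axis direction $\vec{a}$, i.e.\ $\vec{M}^{T}\vec{M}\propto\vec{I}$, which is exactly your conformality conclusion; condition iii) then only adds the preservation of parallelism of the bending axes, which is automatic. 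With this derivation substituted for the frame-orthogonality postulate, and with the work-ratio computation \eqref{eq:aranyos} backing your forward direction, your proof is complete and strictly more thorough than the one printed.
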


\begin{proof}
We will check properties $i)$, $ii)$ and $iii)$ of Definition \ref{def:1}. In each case we will split the rod (with endpoints $i$ and $j$) in two, at point $k$ in-between. The proportionality of the magnitudes of the displacements will be checked by comparing them on the rod-segments. Let us denote the lengths of the two segments with $L_{ik}$ and $L_{kj}$, the 3 dimensional similarity transformation with $\sigma$ and the matrix describing effect on displacements with $\vec{S}$. As similarity transformations are a subset of affine transformations (preserving ratios of parallel line segments), we have 
\begin{align}
\frac{L_{ik}}{L_{kj}}=\frac{L_{\sigma (i)\sigma (k)}}{L_{\sigma (k)\sigma (j)}}
\end{align}
where ${L_{\sigma (i)\sigma (k)}}$ denotes the distance between the images of $i$ and $k$ under $\sigma$.
\begin{enumerate}[label=\roman*)]
\item Let us apply an axial (normal) force $\pm \vec{f}_A$ with magnitude $N$ on the endpoints of the rod, implying $\vec{f}_{ik}=\vec{f}_A=\vec{f}_{kj}$. The own works of force $\vec{f}_A$ on the two axial displacements $\delta_{A,ik}$ and $\delta_{A,kj}$ are
\begin{align}
\frac{1}{2}N \delta_{A,ik}=\frac{1}{2}\scal{\vec{f}_A}{\vec{K}^{-1}_{ik} \vec{f}_A}\\
\frac{1}{2}N \delta_{A,kj}=\frac{1}{2}\scal{\vec{f}_A}{\vec{K}^{-1}_{kj} \vec{f}_A}
\end{align}
such that 
\begin{align}
\frac{L_{ik}}{L_{kj}}=\frac{\delta_{A,ik}}{\delta_{A,kj}}=\frac{\scal{\vec{f}_A}{\vec{K}^{-1}_{ik} \vec{f}_A}}{\scal{\vec{f}_A}{\vec{K}^{-1}_{kj} \vec{f}_A}}
\end{align}
holds due to the homogeneity of the rod. The proportion of the transformed axial displacements can be calculated to be
\begin{align}
\frac{\scal{\vec{S}^{-T}\vec{f}_A}{(\vec{S}^{-T}\vec{K}_{ik}\vec{S}^{-1})^{-1} \vec{S}^{-T}\vec{f}_A}}{\scal{\vec{S}^{-T}\vec{f}_A}{(\vec{S}^{-T}\vec{K}_{kj}\vec{S}^{-1})^{-1} \vec{S}^{-T}\vec{f}_A}}=\frac{\scal{\vec{f}_A}{\vec{K}^{-1}_{ik} \vec{f}_A}}{\scal{\vec{f}_A}{\vec{K}^{-1}_{kj} \vec{f}_A}}=\frac{L_{ik}}{L_{kj}}=\frac{L_{\sigma (i)\sigma (k)}}{L_{\sigma (k)\sigma (j)}} \label{eq:aranyos}
\end{align}
implying the transformed displacements are proportional to the  lengths of the transformed rod segments. 

The fact that $\vec{Sd}_{ik}$ and $\vec{Sd}_{kj}$ represent pure axial displacements can be seen through the angle preserving nature of the similarity transformations. In the 3D setting the axial translational displacements are represented by an ideal line, the intersection line of the planes orthogonal to the axis of the rod. As this orthogonality is preserved, the transformed displacement will be represented by in ideal line lying in the planes orthogonal to the transform of the axis. Note how more general affine transformations would not preserve the axial direction of the displacements.

\item If we replace $\vec{f}_A$ in $a)$ with a torsional moment we can repeat the shown calculation leading to an equation similar to \eqref{eq:aranyos}, implying the proportionality of axial rotations. The axis of the transformed rotation being the rod axis follows trivially. 

\item If we replace $\vec{f}_A$ in $a)$ with a pure bending moment $\vec{f}_M$ and repeat the calculation leading to \eqref{eq:aranyos} we get the proportionality of the rotations from bending. As the starting rod was straight uniform the starting displacements $\vec{d}_{ik}=\vec{K}^{-1}_{ik} \vec{f}_M$ and $\vec{d}_{kj}=\vec{K}^{-1}_{kj} \vec{f}_M$ have parallel lines as axes. The similarity transformation preserves this parallelism, which completes the proof. 
\end{enumerate}

\end{proof}

\section{Summary}
Structural members have been identified with conics and quadrics whose points, if they exist in real projective space express an incidence condition on the deformational forces that can arise. If no such points exist the elasticity conics and quadrics can be used as detailed. The latter exists in 3 and 5 dimensions as well, the applicability of the 3 dimensional ellipsoid depends on the properties of the member as well as the loads, as detailed in Theorem \ref{thm:repi}.

The description of projective transformations of the entire system of forces, displacements and stiffness relations has been presented, along with a basic result showing straight uniform rods stay straight uniform under similarity transformations (Theorem \ref{thm:trafo}). Notably, 
as Fivet \cite{fivet2016projective} pointed out affine transformations preserve the applicability of the gravity field while general projective collineations do not. If one wishes to preserve the straight uniform property of the rods (along with the compatibility of forces with displacements), further restriction to a subset of affine transformations is necessary. For specific problems more general transformations may be usable, the investigation of this is beyond the scope of this work.

The description provided in the paper has multiple properties that make it suitable for computerised use. The treatment of dynames as vectors and the vector-space nature of symmetric matrices makes calculation with them fast and efficient. The construction of structures and sub-structures from previously given known elements is methodical and can easily be automated.  


\section{Declarations}
The work has been supported by the \'UNKP-20-4 New National Excellence Program of the Ministry for Innovation and Technology from the source of the National Research, Development and Innovation Fund.

\appendix
\section{Planar examples}\label{app:ex}
A few planar examples are provided, where the degenerating conics can readily be seen in relation to the mechanical problem at hand. We will consider the Euclidean $x,y$ plane, and embed it into $PG(2)$ with $(x,y,1)$. If we represent forces with $(-F_y,F_x,M_z)$ as homogeneous line coordinates and displacements as $(-\Delta_y,\Delta_x,\Phi_z)$ as homogeneous point coordinates, we get correct lines and points of action. The three examples covered are presented in Figure \ref{fig:app_rudak}.\\ 

\begin{figure}[h]
\centering
\includegraphics[width=1\textwidth]{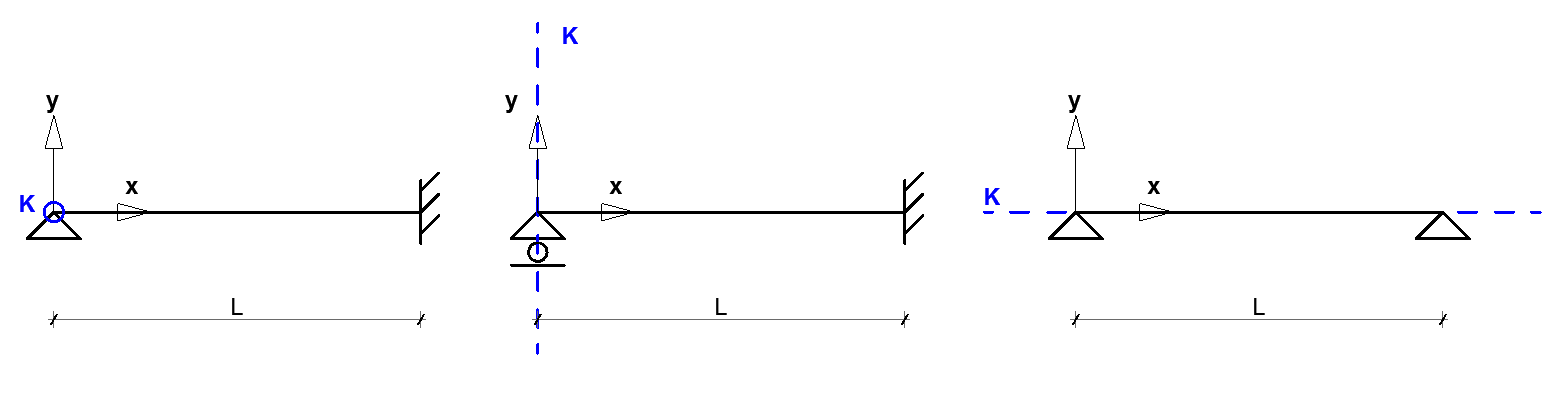}
\caption{Degenerate conics $\mathcal{K}$ appearing as incidence conditions on forces. Left: single point (the origin), where the force must pass thorugh. Middle and right: lines of action the force must have}\label{fig:app_rudak}
\end{figure}

The first case to the left has stiffness matrix
\begin{align}
\vec{K}=\begin{bmatrix}
3EI_z/L^3 & & \\
 & EA/L & \\
 &  & 0 
\end{bmatrix}
\end{align}
with the corresponding stiffness conic 
\begin{align}
3x^2/L^2+y^2/i^2_z=0,
\end{align}  satisfied by $(0,0)$, the point all forces from the kinematic load must go through.

The second case in the middle has stiffness matrix
\begin{align}
\vec{K}=\begin{bmatrix}
3EI_z/L^3 & & \\
 & 0 & \\
 &  & 0 
\end{bmatrix},
\end{align} 
which represents a conic degenerated to the line
\begin{align}
3x^2/L^3=0 \rightarrow x=0
\end{align}
which is the line of action of the arising force from kinematic loads.

The third case to the right has stiffness matrix 
\begin{align}
\vec{K}=\begin{bmatrix}
0 & & \\
 & EA/L & \\
 &  & 0 
\end{bmatrix}
\end{align}
which represents line $y=0$, again the line of action of the arising force from kinematic loads.

\section{Dual examples}\label{app:duex}

We will consider elements of a grillage in the $x,y$ plane, with a 3 dimensional subset of forces and displacements. If we want to see the $x,y$ plane as the euclidean part of a projective plane corresponding to the factorization of a relevant vector-space of a mechanical problem, we should think of forces as triplets $\vec{f}=(-M_y,M_x,F_z)$ and displacements as triplets $\vec{d}=(-\Phi_y,\Phi_x,\Delta_z)$. This way point $(p_x,p_y,1)\sim F_z(p_x,p_y,1)=(-M_y,M_x,F_z)$ represents the point of attack of the force, while triplets $(-\Phi_y,\Phi_x,\Delta_z)$ represent lines around which the rotations happen.
In accordance with engineering practice, we assume the members have negligible rotational stiffness. The stiffness matrix of a  member with fixed supports on both ends (Figure \ref{fig:app_b1}, left) turns into:
\begin{align}
\vec{K}=\begin{bmatrix}
EI_y/L & & \\
 & 0 & \\
 &  & 12EI_y/L^3 
\end{bmatrix}
\end{align}
which needs to be interpreted as a dual conic, since we see forces as points and displacements as lines. The kernel of $\vec{K}$ is spanned by $(0,1,0)$, which represents the line $y=0$. This is nothing else than the axis of the rod, with a mechanical interpretation similar to the "primal" case. The only displacement not causing stresses is rotation around this axis (as we neglected rotational stiffness), and possible deformational forces from relative motion of the endpoints of the member must lie on this line.

\begin{figure}[h]
\centering
\includegraphics[width=1\textwidth]{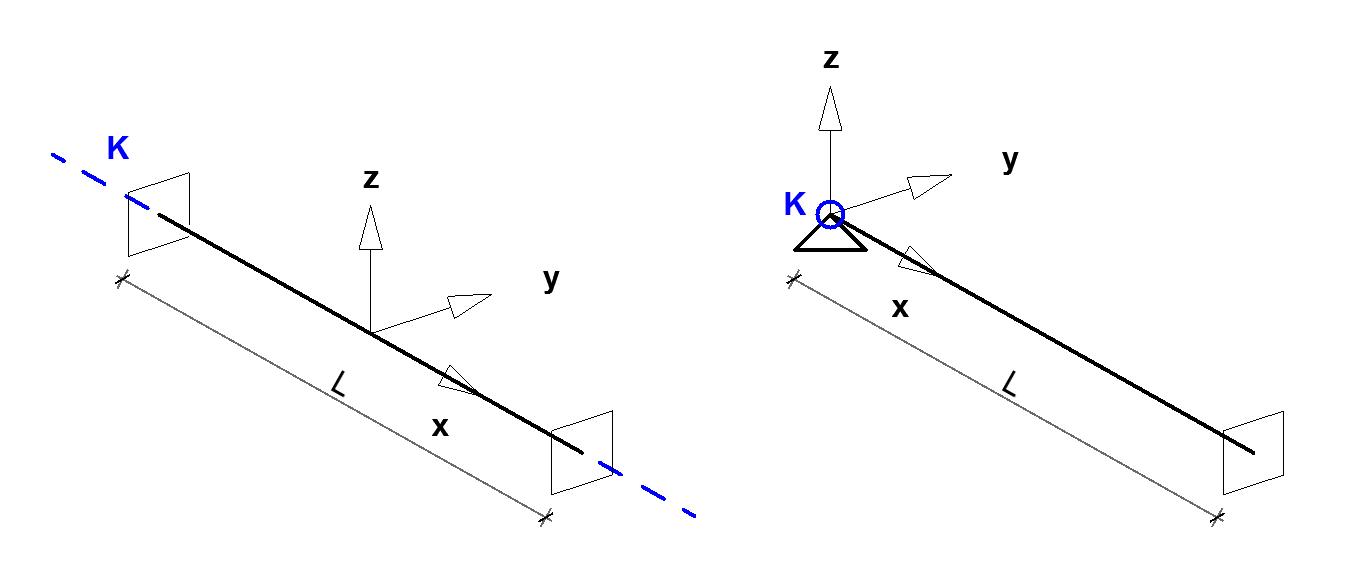}
\caption{Grillage members, with corresponding degenerate stiffness conics. The conics are points and lines in the $x,y$ plane, the $z$ direction is drawn for convenience}\label{fig:app_b1}
\end{figure}

A member supported with a fixed support on one end and a pinned support on the other in an appropriate coordinate system (Figure \ref{fig:app_b1}, right) has stiffness matrix
\begin{align}
\vec{K}=\begin{bmatrix}
0 & & \\
 & 0 & \\
 &  & 3EI_y/L^3 
\end{bmatrix}
\end{align}

The kernel of this is spanned by $\{(1,0,0),(0,1,0)\}$ and this 2 dimensional linear subspace corresponds to a one dimensional pencil of projective lines: all lines passing through the origin. Mechanically speaking the force from the displacement has to act at the pinned support and any rotation with axis passing through this point will not induce stresses and forces.

A simple grillage from these two elements is shown in Figure \ref{fig:app_b2}. The stiffness conic of vertex $i$ would be the sum of the stiffness conics of the two elements. As it is complex, the elasticity ellipse of vertex $i$ is shown. (One should express the conics of the members in a common frame before the addition.)

\begin{figure}[h!]
\centering
\includegraphics[width=1\textwidth]{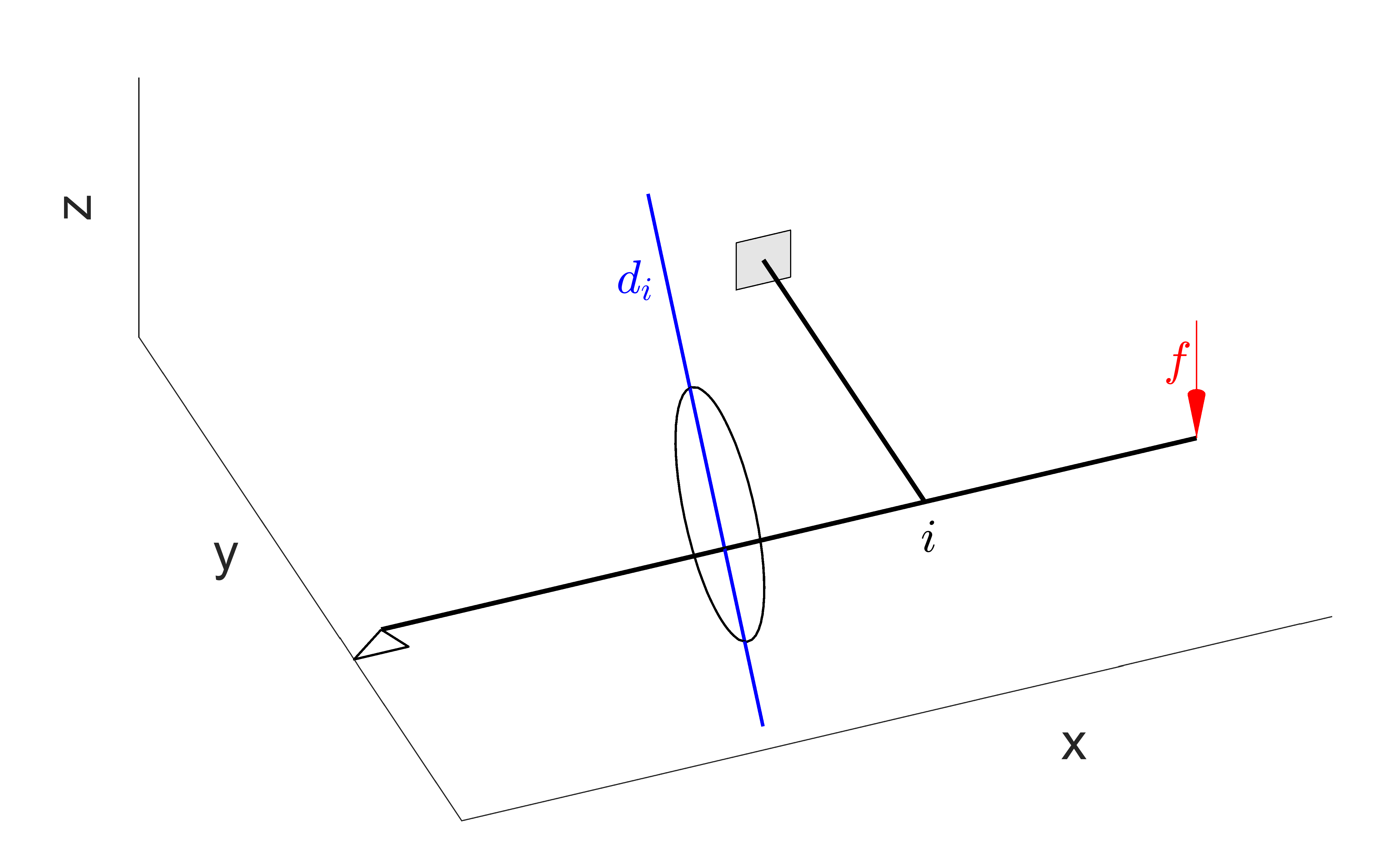}
\caption{Simple grillage, with the (dual) elasticity ellipse of vertex $i$. The point of attack of force $\vec{f}$ is the antipole of the line of rotation $\vec{d}_i$ (of vertex $i$) with respect to the elasticity ellipse}\label{fig:app_b2}
\end{figure}

\pagebreak

\bibliographystyle{unsrt} 
\bibliography{grafobib}

\end{document}